\begin{document}

\spnewtheorem{assumption}{Assumption}{\bfseries}{\itshape}


\title{Closed Non-atomic Resource Allocation Games}

\author{Costas Courcoubetis\inst{1}
%
\and
Antonis Dimakis\inst{2}
\authorrunning{C. Courcoubetis, A. Dimakis}
\institute{Singapore University of Technology and Design
   \email{costas@sutd.edu.sg}\\
 \and
Athens University of Economics and Business
   \email{dimakis@aueb.gr}
}}

\maketitle

\begin{abstract}
How is efficiency affected when demand excesses over supply are signalled through waiting in queues?
We consider a class of congestion games with a nonatomic set of players of a constant mass, based on
a formulation of generic linear programs as sequential resource allocation games. 
Players continuously select activities such that they maximize linear objectives interpreted as time-average of activity rewards, while active resource constraints cause queueing. In turn, the resulting waiting delays enter in the optimization problem of 
each player.

The existence of Wardrop-type equilibria and their properties are investivated
by means of a potential function related to proportional fairness.
The inefficiency of the equilibria relative to optimal resource allocation is characterized  through the price of anarchy which is 2 if all players are of the same type ($\infty$ if not).
\end{abstract}




\section{Introduction}

In crowdsourcing, access, and sharing economies, a large number of individuals interact to exchange goods and services, with each individual pursuing his or her own interest. The matching of supply and demand takes place in shorter times than in traditional product-service economies, so 
mismatches may be manifested also in non-monetary terms as congestion. For example in ride-hailing, it is common for drivers to face significant waiting delays until they are matched with a customer, if the number of available drivers in an area exceeds local demand.  In this paper we consider a class of nonatomic games, and the appropriate equilibrium concept, which capture the noncooperative behavior and congestion effects in resource allocation settings such as above.

There is a large literature on congestion games~\cite{rosenthal}, which examine the interaction between congestion and noncooperative behavior. In the case where the number of players is large and each has a negligible effect on congestion, 
nonatomic congestion games view players as a continuous mass whose equilibrium behavior is described by
Wardrop-type equilibria, first studied for road traffic in~\cite{wardrop, beckmann}.
In this paper we consider a similar case but with a {\em constant} player mass playing a sequential game.

\begin{figure}[tp]
   \centering
   \includegraphics[scale=0.2]{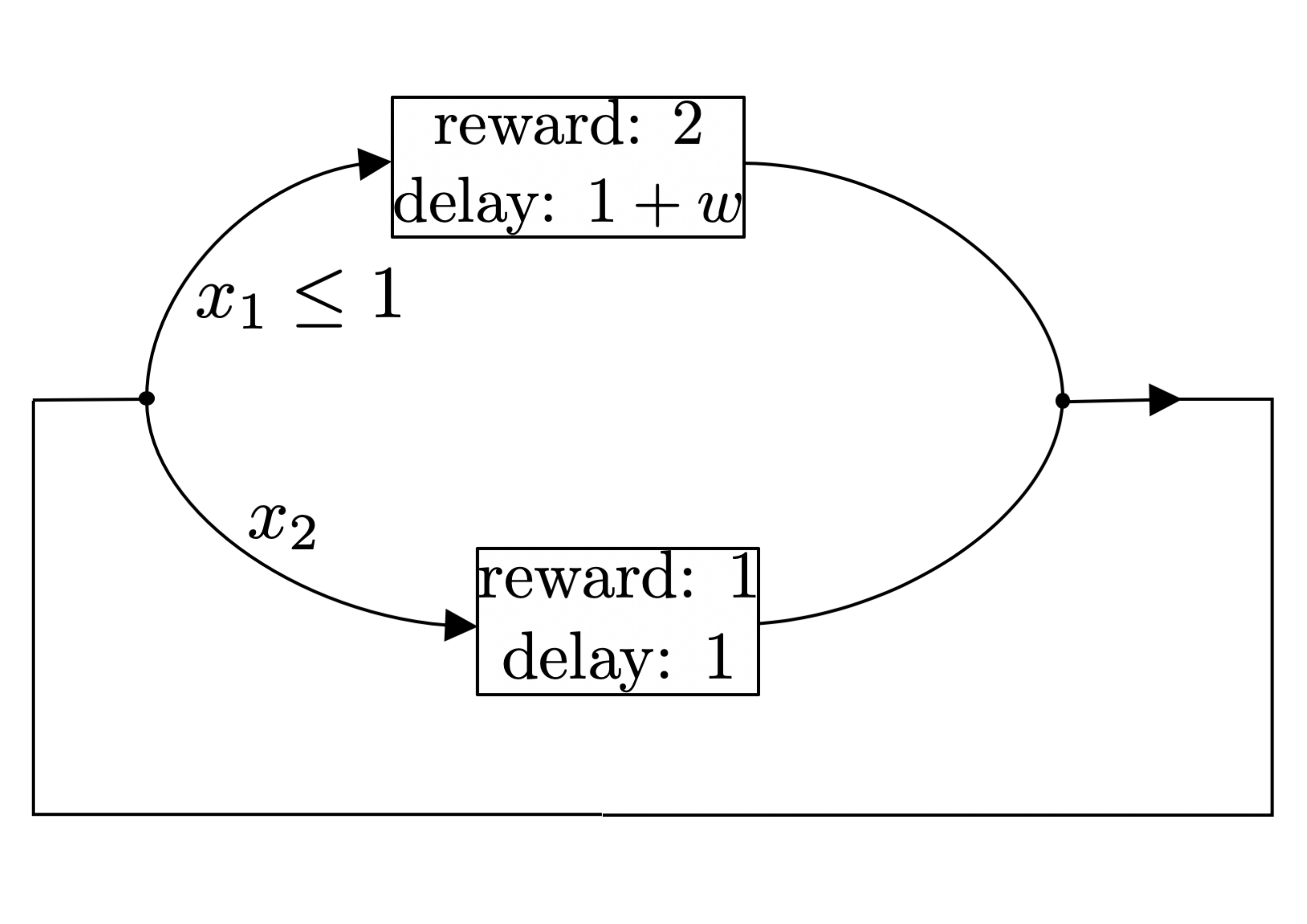} 
   \caption{Selfish circulation of a constant mass of nonatomic players which maximize their average reward per unit time. The waiting delay $w$ is 0 if $x_1<1$. }
   \label{fig:pigou}
\end{figure}

To illustrate the difference consider the example in Fig.~\ref{fig:pigou},
which resembles Pigou's selfish routing example~\cite{pigou}.
Players flow from left to right
utilizing routes 1 (upper) and 2 (lower). Traversal of the upper route offers reward 2 while the lower route a unit reward. 
Let $x_i$ be the rate of players flowing through route $i$, and assume the 
maximum rate at which players can flow through route is 1, i.e., $x_1\leq 1$. Route 2 is not rate limited and the delay is always 1.
The players do not exit the system after a traversal; they return back
to the origin on the left and keep circulating. 
Assume the mass of players in the system is equal to 2.
A basic difference with Pigou's example is that the delay of route 1 is not a function of the flow there. 
It is expressed as $1+ w$ where $w$ is the additional waiting time players face
if they accumulate in route 1 due to the limitation in the flow rate. 
If $x_1<1$, since the players behave as a fluid, they do not accumulate so $w = 0$. On the other hand, if
$x_1 = 1$ then $w$ is not a function of $x_1$ anymore; 
it also depends on the player mass currently on route 2, i.e., $x_2$.
The total mass is 2, so in route 1 there must be $2 - x_2$ players whose mass is also expressed by Little's law~\cite{little}
as $(1+w)x_1$.  Thus, $w = 1 - x_2$ if $x_1 = 1$.

What flows $x_1, x_2$ will result from a `selfish circulation'?  Players
will prefer the upper route as long as the reward per round-trip is higher than that offered by the lower route, i.e.,
$\frac{2}{1+w} \geq 1$.  
If $x_1=1$, this is always the case if $x_2>0$ since $w=1-x_2$ and hence $\frac{2}{1+w} = 1+\frac{x_2}{1+w}>1$.
In fact, the entire mass of players can utilize the upper route, in which case $w=1$, 
and still make that route at least as preferrable as the lower one. Thus, $x_1 = 1, x_2 = 0$ is an
equilibrium, and is easy to see that it is the only one. The  long-run average total reward resulting from the
equilibrium circulation of players is 2 (or 1 per unit of player mass).
This is less that the maximum possible reward, obtained by solving:
\begin{align}
\label{eq:lpexample}
\max \quad & 2x_1 + x_2\\
\nonumber
\text{such that } & x_1 \leq 1,\\
\nonumber
& x_1 + x_2 = 2,\\
\nonumber
\text{over } & x_1, x_2\geq 0,
\end{align}
(Note that in the mass constraint $x_1 + x_2 = 2$ we  do not need to account for waiting players as those can always
be assigned more profitably on route 2.)
Its (obvious) optimal solution, $x^*_1 = 1, x^*_2 = 1$, yields 3 as average reward (or 1.5 per unit mass),
as the use of route 2 increases the average reward by 50\%.

More generally, we take arbitrary linear programs with a single `mass constraint', similar to~\eqref{eq:lpexample}, as our point of departure, and consider a selfish circulation of nonatomic players which select activities that maximize their time-average rewards, given by the objective function. 
The inequality constraints correspond to resource constraints, which when active, cause waiting delays. 
The latter include all the relevant information a player needs to know about the other players' strategies in order to
maximize his or her own time-average rewards, by solving a dynamic program.

In this respect, our work is related to the stationary anonymous sequential games in~\cite{wiecek} where the 
players are aware of how the other players are distributed over strategies, and maximize their  time-average payoffs. 
The equilibria in these systems are similar to our Definition~\ref{def:eq} except
that we allow the inclusion of `balance' constraints, which are private to each player without 
additional private state variables
(see Section~\ref{sec:rag}).
Another difference with the literature on anonymous sequential games is that the existence of equilibrium there is established using nonconstructive compactness arguments (see~\cite{jovanovic, wiecek}).

In Section~\ref{sec:results} the existence and uniqueness properties are established by means of a potential function the players unknowingly maximize, which is markedly different from the objective of the linear program.
For example, the optimization problem corresponding to~\eqref{eq:lpexample} is:
\begin{align*}
\max\quad & 2\log(2x_1 + x_2) - x_1 - x_2\\
\text{such that } & x_1 \leq 1,\\
\text{over } & x_1, x_2\geq 0,
\end{align*}
with the sole optimal solution being the equilibrium flow $x_1 = 1, x_2 = 0$.
As the optimal solutions in the two optimization problems in general do not coincide, the time-average rewards in equilibria
will be strictly lower. The largest possible ratio of the maximum reward over the reward at equilibrium, called the {\em price of anarchy}, is a measure of the inefficiency of equilibrium, first proposed by Koutsoupias and Papadimitriou in~\cite{koutsoupias}.
For nonatomic congestion games, the price of anarchy has been first computed in~\cite{roughgarden} for various families of delay functions.
In Proposition~\ref{prop:poa} we establish that the price of anarchy is 2, attained in the limit 
of a sequence of simple examples, similar to that above.


In Section~\ref{sec:examples} we consider examples from three areas: ride-hailing, crowdsourcing platforms, and
interacting semi-Markov decision processes. 
For each case, we give example formulations as resource allocation games and obtain some new results.
In~\cite{dai} the authors consider the optimization of fluid model of a ride-hailing system, where
mismatches of demand and supply cause waiting delays similar to fluid queues, but no gaming aspects are explored.
This is done in~\cite{johari} where a concept of equilibrium similar to ours is defined for the two strategies of whether to circulate (with routing fixed) or not. \cite{bimpikis} considers routing as part of the strategy set and establishes 
existence of equilibrium in symmetric systems with identical players. 
Corollary~\ref{cor:delay} in Section~\ref{sec:results} extends the results
of~\cite{bimpikis} for arbitrary network topologies and multiple player types.

The statements and proofs of the main results
are given in Section~\ref{sec:results}, followed by discussion in 
Section~\ref{sec:conclusions}.


\section{A Linear Resource Allocation Game}
\label{sec:rag}
A mass $d_l>0$ of type $l=1,\ldots,L$ players generate value by performing a set of $J$ activities which consume $I$ resources.
For a type $l$ player, activity $j\in\{1,\ldots,J\}$ takes time $t^l_j\geq0$ to complete and consumes $a^l_{ij}\geq 0$ 
units of resource $i\in\{1,\ldots,I\}$.
Let $t_l = (t^l_j,j = 1,\ldots,J)$ be the column vector of activity durations, and $b = (b_i, i=1,\ldots,I)$ where
$b_i>0$ is the rate at which resource $i$ is provided.
Also, let $x^l_{j}$ denote the total rate at which each activity $j$ is taken by all players of type $l$,
and $x^l = (x^l_j, j = 1,\ldots,J)$ be the column vector of type $l$ rates. 
Then the resource constraints are expressed as $\sum_lA_lx_l \leq b$,
where $A_l$ is a $I\times J$ matrix with $a^l_{ij}$ in its $i$-th row and $j$-th column.
Each type $l$ player receives reward $c^l_j$ for completing activitity $j$. The total reward rate for all type $l$ players
is expressed as ${c_l}^T x_l$, where $c_l= (c^l_{j}, j = 1,\ldots,n)$ is a column vector.

The activities can be interdependent in the sense that the rates $x_l$ for type $l$ satisfy $K_l$ homogenous {\em balance constraints}, i.e., $H_lx_l = 0$, for some $K_l\times J$ matrix $H_l$ with the element of the $k$-th row and $j$-th column denoted by $h^l_{kj}$. Note that the resource constraints restrict the aggregate rates, whereas the balance constraints restrict the strategies of each player.

Next, we define the activity rates which correspond to optimal resource allocation.
\begin{definition}
A vector of activity rates $x^* = (x_l^*, l = 1,\ldots, L)$ is {\em optimal} if it maximizes the total reward rate, i.e.,
\begin{align}
\label{eq:primal}
\max \quad & \sum_l{c_l}^T x_l\\
\label{eq:resource_constraint}
\text{such that } & \sum_lA_lx_l \leq b,\\
\label{eq:primal_balance}
& H_lx_l = 0,\\
\label{eq:primal_mass}
&{ t_l}^T x_l = d_l,\quad l=1,\ldots,L,\\
\nonumber
\text{over } & x_l \geq 0, \quad l=1,\ldots,L.
\end{align}
\end{definition}


If instead players act selfishly, each maximizes his or her own average reward rate. As activities are assigned with no central coordination, players  may have to wait before they can commence high reward activities due to competition for the limited resources.
Thus, players need also take into account the waiting delay $w^l_j$ before each activity $j$ can commence.

\begin{definition}
\label{def:eq}
A pair $(x^o, w)$  where $x^o = (x^o_l, l = 1,\ldots, L), w = (w_l, l = 1,\ldots, L)$ is an {\em equilibrium} if:
\begin{enumerate}
\item 
$x^o_l=(x^l_j,j=1,\ldots,J)$ is an optimal solution of
\begin{align}
\label{eq:decomposition1}
\max \quad&  {c_l}^T x_l \\   
\label{eq:decomposition1_balance}
\text{such that } & H_lx = 0,\\
\label{eq:decomposition1_mass}
& {t_l}^Tx + {w_l}^Tx= d_l,\\
\nonumber
\text{over } & x\in\mathbb{R}_+^J,
\end{align}
for each $l=1,\ldots,L$.

\item 
\label{def:cnd2}
\begin{equation}
\label{eq:resource_constraints}
\sum_lA_lx^o_l \leq b.
\end{equation}

\item 
\label{def:cnd3}
$w_l= {A_l}^T \delta$ for a nonnegative column vector $\delta = (\delta_i, i = 1,\ldots, I)$ with $\delta_i = 0$ if $\sum_{j,l}a^l_{ij}x^{l}_{j} < b_i$.
\end{enumerate}
\end{definition}

In problem~\eqref{eq:decomposition1} the time-average reward is maximized from the perspective of a type $l$ player: time is split into either performing some activity or waiting for it (as suggested by \eqref{eq:decomposition1_mass}) while respecting the balance constraints. 
Each player solves an instance of~\eqref{eq:decomposition1} for an infinitesimal mass in the righthand-side of~\eqref{eq:decomposition1_mass}, but as
optimal solutions are homogeneous of degree 1 with respect to the mass constant, $x^o_l$ gives the equilibrium rates for the entire mass of type $l$ 
players.

Resource constraints~\eqref{eq:resource_constraints} are not part of the optimization in~\eqref{eq:decomposition1} as resource capacities $b$ and aggregate rates $x^o$ are not directly known to  the players; resource exhaustion is signaled through the waiting delays $w$ instead. Condition 3 in Definition~\ref{def:eq} requires the delays to be of a specific form which can be thought to result from the following posited mechanism: 
tickets granting usage of single resource $i$ units are handed out from a booth for the respective resource at rate $b_i$.
Players of type $l$ can start performing activity $j$ 
once they have collected all tickets for the resources required, i.e., $a^l_{ij}$ tickets for each resource $i$.
If $\delta_i$ is the delay to obtain a single resource $i$ ticket then the waiting delay $w^l_j$ for collecting all the required tickets for activity $j$ is $\sum_{i}a^l_{ij}\delta_i$.

We make the assumption that there always exist activity assignments with positive rewards for all types.
\begin{assumption}[Feasibility]
\label{as:cx}
There exist
nonnegative vectors $x_l, l=1,\ldots,L$ with $\sum_lA_lx_l \leq b, H_lx_l = 0, {c_l}^Tx_l>0$ for all $l$. 
\end{assumption}

Also, we assume that all types have the incentive to participate under all possible waiting delays. For example, this is the case when there is an outside option with a positive reward.
\begin{assumption}[Participation]
\label{as:participate}
For every $w_l\geq 0, l = 1,\ldots,L$, the maximum value ${c_l}^T x_l$ in~\eqref{eq:decomposition1} is strictly positive.
\end{assumption}

\section{Examples}
\label{sec:examples}

\subsection{Ride-hailing}
\label{sec:ridesharing}
In this section we formulate a model for ride-hailing which fits into the resource allocation framework.
In ride-hailing systems
a population of drivers transport customers to their destinations. 
We consider a geographical area which we assume it is divided into a finite set of regions. Let $b_{i}$ be rate at which customers arrive in region $i$, with a proportion $q_{ij}$ of them
requesting transport to region $j$. In the context of resource allocation, customers are seen as resources.

The drivers constitute the players which we assume are of a single type with mass $d$. There are two types of activities:  
i) a `busy' activity, where the driver transports a customer who has been picked up from $i$ to his destination, and ii) a `free' activity, where the driver chooses to move from $i$ to $j$ without carrying a customer. 
Here it is assumed drivers cannot pickup customers from different regions, and any customers which exceed the 
driver capacity in a region are lost.
Notice that the `busy' activity may involve waiting if the supply of drivers exceeds the rate of arriving customers. 
Also, it does not include the customer's destination as this is typically not known to the driver before agreeing to serve the customer.
A driver is compensated with $c_i>0$ per unit time for giving a ride originating from region $i$.
Thus, the busy activity brings an expected reward $c_i\sum_j q_{ij}t_{ij}$, with $t_{ij}>0$ being the transport time from $i$ to $j$, while the free activity is not rewarded and takes $t_{ij}$ time to complete.

Let $x_i$ be the rate of drivers choosing the busy activity in $i$, and $y_{ij}$ be the rate of drivers moving free from $i$ to $j$. Then, as the inflow and outflow of drivers in any region must balance, we have the constraint:
\begin{equation*}
x_i + \sum_j y_{ij} = \sum_{j} x_j q_{ji} + \sum_j y_{ji},
\end{equation*}
for each $i$. The first term on the righthand side consists of the rate of busy drivers arriving to $i$ after reaching the destination of the customer that was picked up from $j$, for any $j$.
Each busy activity `consumes' a customer, so we have the resource constraint $x_i \leq b_i$ for all $i$.

Corollary~\ref{cor:delay} guarantees the existence of an equilibrium $(x^o, w)$, where the average reward and
the mass of waiting drivers have unique values in all equilibria. The equilibria can be computed by solving the convex optimization problem~\eqref{eq:functional}.

\begin{example}
\label{eg:ridesharing}

Consider an area with three regions as depicted in Fig.~\ref{fig:regions}.
\begin{figure}[tp]
   \centering
   \includegraphics[scale=0.2]{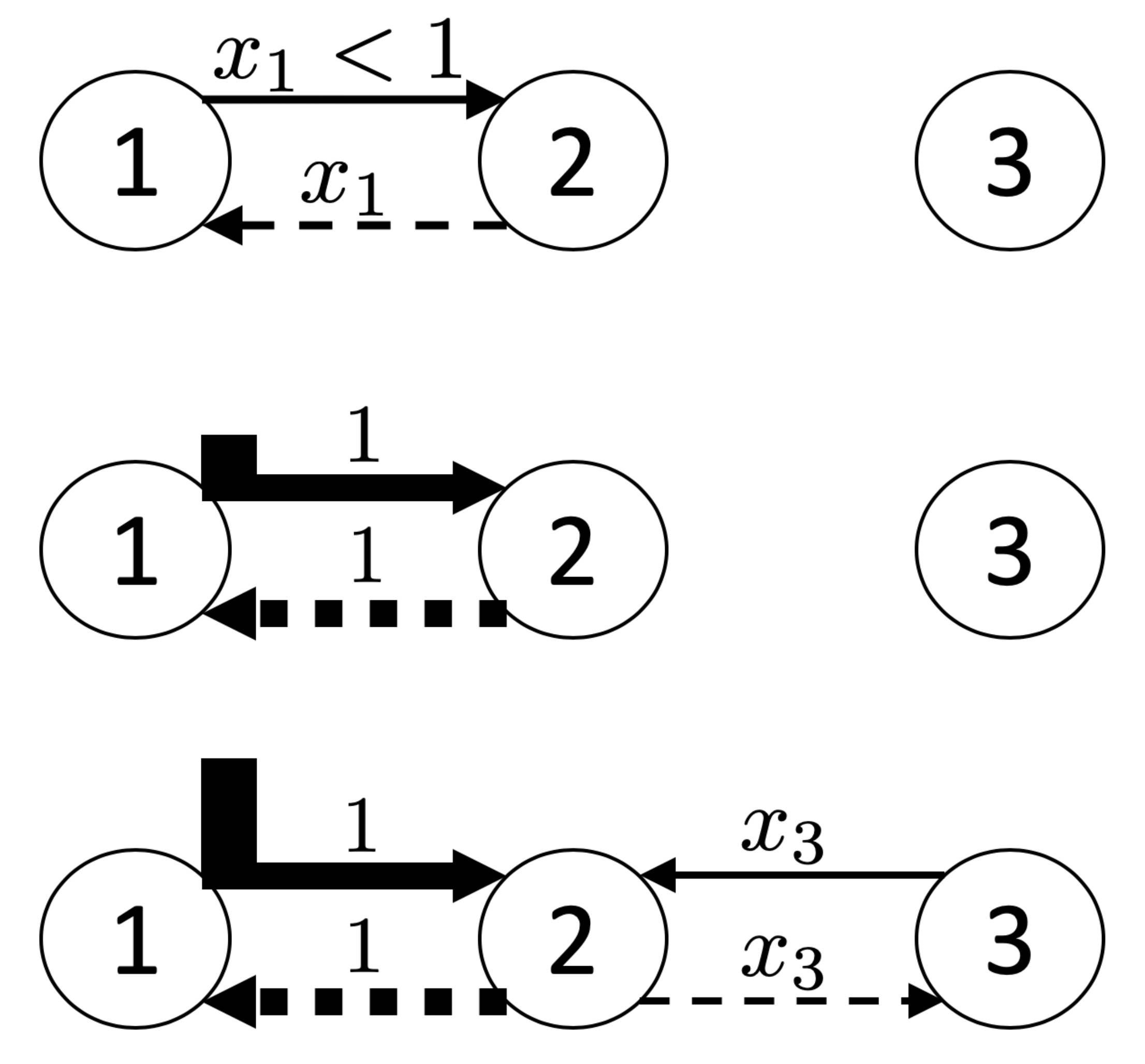} 
   \caption{A ride-hailing system with $d$ drivers comprised by three regions (depicted as arcs). Top: busy and free flows (solid and dashed arrows, respectively) for $d\leq2$. Middle: for $2<d\leq 4$ drivers find it still more profitable to queue in region 1 instead of serving region 3. Bottom: for $4 < d \leq 6$ the queueing delay is sufficiently high so drivers start serving region 3 which became equally profitable due to the high waiting in region 1.}
   \label{fig:regions}
\end{figure}
Customers request transport from regions 1, 3, with unit rate from each, towards the center region 2, i.e., 
$b_1 = b_3 = 1, b_2 = 0, q_{12} = q_{32} = 1$. For each trip transporting a customer from region 3, the driver receives a unit reward, while from region 1 receives double ($c_1 = 2, c_3 = 1$).
Assume unit transport times between neighboring regions: $t_{ij} = |i - j|$ for any two regions $i,j$. 

We observe three regimes, depending on the mass $d$ of drivers.
In the first regime, there is no waiting in our fluid model to pick up customers in region 1.
Serving continuously customers from 1 to 2 generates 2 units of reward per round-trip, i.e., an average reward rate of 1.
This is higher than the $1/2$ average reward rate earned from serving customers from 3 to 2. Thus, all $d$
drivers will choose serving region 1 provided they can always pickup a customer on their return to region 1. This will be possible as long
there are no waiting drivers in region 1, and so $d$ equals the number of drivers $2x_1$ 
on the forward and return trip. As $x_1\leq b_1 = 1$, we must have $d\leq 2$.

In the second regime, queues start foming in region 1, but are not long enough to motivate drivers to serve customers in region 3. If $d$ is just above $2$, then the customer demand from 1 cannot support all drivers and so some of them may wait. They will do so
if the average reward (including wait) is less than the average reward serving region 3 (in which there is no waiting.)
At this point the total reward rate is $1$ and does not increase for small increases of $d$, even though the revenue stream from customers from region 3 is not utilized. Clearly, this equilibrium does not maximize the total reward 
rate~\eqref{eq:primal}, and even more, it is not Pareto efficient, i.e.,
the society of drivers as a whole could gain more by serving region 3 too and splitting the total proceeds.

In the third regime, serving region 3 becomes a best choice due to the high delays in waiting for customers in region 1.
The mass $d-2$ of waiting drivers grows, as $d$ is further increased, until the average reward $2/d$ 
equals the 1/2 reward for serving region 3, i.e., $d = 4$. If $d>4$, the extra $d-4$ drivers all serve region 3 while the
queue at 1 stops increasing until $d>6$ in which point queues in both regions will increase at the same time while keeping
the same average rewards.

\end{example}

In the above example, as the one in Fig.~\ref{fig:pigou}, 
the optimal rewards given by~\eqref{eq:primal} and those at equilibrium deviate. 
In Proposition~\ref{prop:poa} we show that the optimal value cannot exceed twice the reward at equilibrium. 


\subsection{Crowdsourcing}
\label{sec:crowdsourcing}
In crowdsourcing platforms, tasks which typically form small parts of a much larger effort, are executed by many participants in parallel which may receive a reward for each task completion. The tasks vary in their difficulty,  time to complete, reward given etc., and so do the capabilities and task preferences of the participants. The latter, typically select tasks in order to receive as high rewards as possible.

We can formulate a simplified model in terms of resource allocation as follows: 
tasks of type $i$  are generated at rate $b_i$ and correspond to a unit of resource $i$. Activity $i$ concerns the processing of one type $i$ task. Participants, which are the players here, are of $L$ different types, with the rewards $c_l$, task processing times $t_l$ being dependent on the type $l$. 
One of the activities corresponds to idling and has 0 reward.
Task types which cannot be undertaken by a participant type, are assumed to bring a negative reward so that
they are never selected.
$A_l$ are all unit matrices and there are no balance constraints as the tasks are assumed independent.

Theorem~\ref{thm:main} below, implies an equilibrium $(x^o, w)$ exists and the resulting aggregate reward ${c_l}^Tx^o_l$ attained by type $l$ participants, for each $l$, is unique. By Corollary~\ref{cor:delay},  the waiting delay $w^l_j$ is uniquely determined for tasks with nonzero equilibrium rates, and given by~\eqref{eq:wformula}.

How does the total reward $\sum_l {c_l}^Tx^o_l$ compares to the  maximum possible reward when task assignment is performed by~\eqref{eq:primal}, with
the same participants? Notice that if all rewards of one participant type, e.g., 1, are doubled while those of the other types do not change, $x^o$ remains an equilibrium as the relative rewards between activities matter in players' selection; not the actual rewards. Thus, there will be no increase in the amount of tasks completed by type 1. This is not the case under optimal task assignment, as the change will likely allow type 1 to complete more tasks (by having other types idle) because their rewards are part of the system objective. The increase in optimal rewards may be arbitrarily larger than the increase in ${c_l}^Tx^o_l$, as illustrated in the following example.
\begin{example}
Consider $L = 2$ participant types, with tasks of a single type (besides the idling task) arriving at rate 1. The rewarded value is $1/\epsilon$ and
$1$ for type 1 and 2 respectively, for some constant $\epsilon>0$.  The mass of type 2 participants is $1/\epsilon$, while that of type 1 is unit. Hence,
\begin{align*}
\max\quad & \frac{1}{\epsilon} x_1^1 + x_1^2\\
\text{s.t. } & x_1^1 + x_1^2 \leq 1,\\
& x_1^1 + x_2^1 = 1,\\
& x_1^2 + x^2_2 = \frac{1}{\epsilon},\\
\text{over }& x^1_1, x^1_2, x^2_1, x^2_2\geq 0.
\end{align*}
yields the optimal solution $x^1_1 = 1, x_1^2 = 0$, with $x^1_2, x^2_2$ being the rates of the idling activities for each participant type. 
This is expected, as the higher value type 1 participants generate more value than type 2, attaining total value $1/\epsilon$ and both types choose task type 1 since it is the only one generating positive revenue.

On the other hand, the (unique) equilibrium has
\begin{equation*}
x^1_1 = \frac{\epsilon}{1+\epsilon},\> x^2_1 = \frac{1}{1+\epsilon}\>, w_1 = \frac{1}{\epsilon}
\end{equation*}
yielding total value $2/(1+\epsilon)$. This can be formally shown either directly from Definition~\ref{def:eq}, or by Theorem~\ref{thm:main} below, but it is 
expected because type $1$ participants are a fraction $\epsilon$ of type 2.

The ratio of the optimal to the equilibrium value, i.e., the `price of anarchy', is
$\frac{1}{2} + \frac{1}{2\epsilon}$ which approaches $\infty$ as $\epsilon\rightarrow 0$.

\end{example}

\subsection{Interacting semi-Markov Decision Processes}
\label{sec:mdps}
In this section we formulate a nonatomic game with players' states evolving
according to semi-Markov decision processes (SMDPs), which interact through congestion effects due to linear
constraints. Although Proposition~\ref{prop:mdp} below holds for players with SMDPs of multiple types, we state it for a single player type to simplify notation.
We then show that such games are instances of stationary anonymous sequential games~\cite{jovanovic, wiecek}.

Consider an SMDP with a finite state space $\mathcal{S}$ and action space $\mathcal{A}$. At each state
$i\in\mathcal{S}$, action $a\in\mathcal{A}$ will make the process transit to $j$ with probability $p_{ij}^a$ after a random time with mean $t_{ia}>0$, which is independent of the past conditionally on the current state and action.
A stationary policy is specified by the probability $p(a|i)$ of choosing action $a$ once transitioning to $i$, for every $i\in\mathcal{S}, a\in\mathcal{A}$.
We assume the transition probabilities are such that the embedded Markov chain resulting from any stationary policy irreducible, so in particular the SMDP possesses a unique stationary distribution $(\pi_i, i\in \mathcal{S})$. Under this distribution, let $x_{ia}=d\pi_ip(a|i)$ be the average rate at which action $a$ is taken in state $i$, by $d$ copies of the SMDP, all following the same policy.

Let $c_{ia}$ be the reward received for taking action $a$ in state $i$.
Action rates $x = (x_{ia}, i\in\mathcal{S}, a\in\mathcal{A})$ are constrained by resource constraints of the form $Ax\leq b$, 
for nonnegative $I\times \left|\mathcal{S}\times\mathcal{A}\right|$ matrix $A$ and column vector $b$. 
As in the general framework, active resource constraints cause a waiting delay $w_{ia}$ before 
action $a$ in state $i$ can be taken.

We consider equilibria of the following form.
\begin{definition}
$(p, x^o, w)$ is an {\em SMDP equilibrium} if and only if:
\begin{enumerate}
\item
The policy $p = (p(a|i), i\in \mathcal{S}, a\in\mathcal{A})$ solves the dynamic programming equation
\begin{equation}
\label{eq:dp}
V(i) = \max_{a\in {\mathcal A}}\left[ c_{ia} + \gamma\left(t_{ia} +w_{ia}\right) + \sum_{j\in \mathcal{A}}p_{ij}^a V(j)\right],\> i\in \mathcal{S},
\end{equation}
i.e., the maximum is attained for any $a$ with $p(a|i)>0$.
\item
\begin{equation}
\label{eq:xia}
x^o_{ia} = d\pi_i p(a|i),\> i\in\mathcal{S}, a\in\mathcal{A},
\end{equation}
where $(\pi_i, i\in\mathcal{S})$ is the stationary distribution under policy $p$.

\item
$Ax^o\leq b$,

\item
$w = {A}^T \delta$ for a nonnegative column vector $\delta$ with $\delta_q = 0$ if $\sum_{i,a}a_{q,ia}x^o_{q,ia} < b_q$.

\end{enumerate}
\end{definition}

\begin{proposition}
\label{prop:mdp}
An SMDP equilibrium $(x^o, w, p)$ exists and the time-average reward attained by the policy $p$ is the same in every  equilibrium.
\end{proposition}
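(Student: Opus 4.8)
My plan is to reduce the existence and uniqueness claims for SMDP equilibria to the corresponding results for the linear resource allocation game, namely Theorem~\ref{thm:main} (existence and uniqueness of the time-average reward), by exhibiting an explicit correspondence between the two equilibrium notions. The first step is to observe that the SMDP data fits the framework of Section~\ref{sec:rag}: the activities are the state-action pairs $(i,a)$, the durations are $t_{ia}$, the rewards are $c_{ia}$, the resource matrix and capacities are the given $A,b$, and the balance constraints $H x = 0$ are precisely the flow-conservation equations $\sum_a x_{ia} = \sum_{j,a} p_{ji}^a x_{ja}$ for every $i\in\mathcal{S}$, which are homogeneous and linear in $x$. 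The mass constraint $t^T x = d$ here reads $\sum_{i,a} t_{ia} x_{ia} = d$, which by the renewal-reward theorem is exactly the statement that $x_{ia} = d\pi_i p(a|i)$ for the stationary distribution $\pi$ of the embedded chain. So a single-type instance of the resource allocation game is being defined, with the irreducibility assumption playing the role that guarantees the balance constraints pin down a well-defined stationary distribution for any choice of randomized action frequencies.

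The second step is to show the two equilibrium definitions coincide under this identification. Conditions 3 and 4 (the $Ax^o\le b$ inequality and the $w={A}^T\delta$ complementary-slackness form of the delays) are literally the same as conditions 2 and 3 of Definition~\ref{def:eq}. The content is in matching condition~1: I must show that a stationary policy $p$ satisfying the dynamic programming equation~\eqref{eq:dp} with delays $w$ induces, via $x^o_{ia}=d\pi_i p(a|i)$, an optimal solution of the player's linear program~\eqref{eq:decomposition1} with the same $w$, and conversely. This is the classical LP duality for average-cost (semi-)Markov decision processes: the LP~\eqref{eq:decomposition1}, after substituting the balance and mass constraints, is the standard ``action-frequency'' LP whose dual is the Bellman optimality equation, and its basic feasible solutions correspond to deterministic policies while general feasible solutions correspond to randomized ones. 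The only twist relative to the textbook statement is that the per-stage time is $t_{ia}+w_{ia}$ rather than $t_{ia}$, but since $w$ is held fixed in both the DP equation and the LP, this is just a relabeling of the durations, and the standard equivalence goes through verbatim. I would state this as a lemma: for fixed $w\ge0$, $x$ is optimal in~\eqref{eq:decomposition1} iff $x_{ia}=d\pi_i p(a|i)$ for some $\pi,p$ with $p$ satisfying~\eqref{eq:dp}, and moreover $c^T x = d\cdot g$ where $g$ is the optimal average reward of the SMDP with delays $w$.

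With this correspondence in hand, existence of an SMDP equilibrium is immediate from existence of an equilibrium $(x^o,w)$ for the associated resource allocation game (Theorem~\ref{thm:main}, whose hypotheses Assumptions~\ref{as:cx} and~\ref{as:participate} I would note are inherited here — feasibility from the existence of some policy respecting $Ax\le b$, participation from, e.g., an idling action of nonnegative reward being always available): given such $(x^o,w)$, recover $p$ from $x^o$ by $p(a|i)=x^o_{ia}/\sum_{a'}x^o_{ia'}$ on the recurrent states and extend arbitrarily to states with zero frequency, using the lemma to see $p$ solves~\eqref{eq:dp}. For the uniqueness of the time-average reward: Theorem~\ref{thm:main} gives that $c^T x^o$ is the same in every equilibrium of the resource allocation game, and by the lemma the time-average reward of $p$ equals $c^T x^o / d$, hence is the same across all SMDP equilibria. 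The main obstacle I anticipate is the careful handling of states visited with zero frequency under $p$: the LP~\eqref{eq:decomposition1} only constrains $x$ on the support, so the equivalence lemma must be phrased so that the DP equation~\eqref{eq:dp} is required to hold at \emph{all} states (this is what makes $p$ a well-defined policy and the embedded chain irreducible) while optimality in the LP only sees the recurrent class; reconciling these — essentially arguing that under irreducibility every state is recurrent under every stationary policy, so the support is all of $\mathcal{S}\times\{a:p(a|i)>0\}$ — is the step that needs genuine care, together with verifying that the delays $w$, which multiply frequencies in the mass constraint but appear additively in~\eqref{eq:dp}, are transcribed consistently between the two formulations.
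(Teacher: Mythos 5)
Your proposal follows essentially the same route as the paper: identify activities with state-action pairs, take the flow-conservation equations as the balance constraints, invoke the classical equivalence between the dynamic programming equation~\eqref{eq:dp} and the action-frequency linear program (with $t_{ia}+w_{ia}$ as effective durations) to match condition~1 of the SMDP equilibrium with~\eqref{eq:decomposition1}, and then import existence and uniqueness of the time-average reward from Corollary~\ref{cor:delay}. Your additional care about zero-frequency states and about verifying Assumptions~\ref{as:cx} and~\ref{as:participate} is welcome detail the paper leaves implicit (relying on the irreducibility assumption), but it does not change the argument.
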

\begin{proof}
SMDP equilibria directly correspond to equilibria of Definition~\ref{def:eq}, as~\eqref{eq:dp} is equivalent to~\eqref{eq:decomposition1}
for $L=1$, and set of activities $\mathcal{S}\times\mathcal{A}$.
It is well known that the stationary policies $p$ which optimize~\eqref{eq:dp} correspond to optimal solutions $(y^*_{ia}, (i,a)\in \mathcal{S}\times\mathcal{A})$ of the linear program~\cite{kallenberg}:
\begin{align}
\nonumber
\max \quad & \sum_{(i,a)\in\mathcal{S}\times\mathcal{A}} c_{ia}y_{ia} \\
\label{eq:balance_mdp}
\text{s.t. } & \sum_{a\in \mathcal{A}}y_{ia} = \sum_{(j,a')\in \mathcal{S}\times\mathcal{A}} y_{ja'} p_{ji}^{a'},\> i \in \mathcal{S},\\
\label{eq:mdp_mass}
& \sum_{(i,a)\in \mathcal{S}\times\mathcal{A}}\left(t_{ia} + w_{ia}\right)y_{ia} = 1,\\
\nonumber
\text{over }& y_{ia}\geq 0, (i,a)\in \mathcal{S}\times\mathcal{A},
\end{align}
where $y^*_{ia}$ corresponds to the rate action $a$ is chosen at $i$ under
an optimal policy of~\eqref{eq:dp}. This implies $x^o/d$ is an optimal solution, since it corresponds to $p$, by~\eqref{eq:xia}.
Therefore, $(x^o, w)$ satisfies the conditions in Definition~\ref{def:eq} for balance constraints given by~\eqref{eq:balance_mdp}.

The converse is also true, as given an equilibrium $(x^o, w)$,
\begin{equation*}
p(a|i) = \frac{x^o_{ia}}{\sum_{a'}x^o_{ia'}},\> 
\pi_i = \frac{\sum_a(t_{ia} + w_{ia})x^o_{ia}}{\sum_{j,a'} (t_{ja'} + w_{ja'})x^o_{ja'}}
\end{equation*}
define a policy and the corresponding stationary distribution which give an SMDP equilibrium. From Corollary~\ref{cor:delay}, an SMDP equilibrium exists.\qed

\end{proof}

If no resource constraint is active then $w_{ia}=0$ for all $i, a$, and no interaction takes place between the SMDPs.  
In this case the equilibrium policies achieve the maximum possible total average reward, and a joint policy selection
(control centralization) cannot produce a higher total reward. 
If some constraints are active in equilibrium and waiting results then the average 
reward is strictly below the one possible under centralized control. 
This drop due to decentralization, cannot be more than half
because of Proposition~\ref{prop:poa} below. 

\subsubsection{Relation to stationary anonymous sequential games:}

In stationary anonymous sequential games~\cite{jovanovic, wiecek}, each player knows its own state and the distribution
$n = (n_{ia}, i\in\mathcal{S}, a \in\mathcal{A})$ of player mass on state-action pairs.
The game between SMDPs is an instance of a (nonlinear) stationary anonymous sequential game because 
the information on the aggregate, $(x,w)$, and $n$ are equivalent, through the identities
$n_{ia} = \left(t_{ia} + w_{ia}\right)x_{ia}$ for each $i,a$.

\begin{lemma}
For each nonnegative $n = (n_{ia}, i\in\mathcal{S}, a \in\mathcal{A})$ with 
$\sum_{(i,a)\in\mathcal{S}\times\mathcal{A}} n_{ia} = d$, there exist unique
$x(n) = (x_{ia}(n), (i,a)\in\mathcal{S}\times\mathcal{A})$, and
$w(n) = A^T\delta$ with $\delta\in\mathbb{R}_+^I$, such that
\begin{equation}
\label{eq:lemcnd}
\begin{array}{c}
n_{ia} = (t_{ia} + w_{ia}(n))x_{ia}(n),\text{ for all }i\in\mathcal{S}, a\in\mathcal{A},\\
Ax(n)\leq b,\quad \delta^T\left(Ax(n)-b\right) = 0.
\end{array}
\end{equation}
The mapping $n\mapsto (x(n), w(n))$ is continuous.
\end{lemma}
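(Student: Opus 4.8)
The statement asserts that the map $n \mapsto (x(n), w(n))$ defined by the fixed-point-type conditions \eqref{eq:lemcnd} is well-defined (existence and uniqueness) and continuous. The plan is to recognize \eqref{eq:lemcnd} as the optimality conditions of a strictly convex program, so that existence, uniqueness, and continuity all follow from standard parametric convex-optimization arguments.

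First I would set up the candidate convex program. Given $n$ with $n_{ia}\ge 0$ and $\sum_{i,a} n_{ia}=d$, consider choosing rates $x = (x_{ia})\ge 0$, and think of each $x_{ia}$ as the throughput of action $(i,a)$ while $n_{ia}$ is the fixed ``mass'' on that pair; the delay felt is $n_{ia}/x_{ia} - t_{ia}$, which must equal $w_{ia}$. The condition $w = A^T\delta$ with $\delta\ge 0$ and complementary slackness $\delta^T(Ax-b)=0$, together with $Ax\le b$, is exactly the KKT system for
\begin{align*}
\min_{x\ge 0}\quad & \sum_{i,a}\bigl(n_{ia}\log x_{ia} - t_{ia}x_{ia}\bigr) \quad\text{(to be maximized)}\\
\text{s.t.}\quad & Ax \le b,
\end{align*}
where $\delta$ is the multiplier of $Ax\le b$: the stationarity condition in coordinate $(i,a)$ reads $n_{ia}/x_{ia} - t_{ia} = (A^T\delta)_{ia} = w_{ia}$, i.e. $n_{ia} = (t_{ia}+w_{ia})x_{ia}$, precisely the first line of \eqref{eq:lemcnd}. (On pairs with $n_{ia}=0$ the optimal $x_{ia}=0$ and the corresponding equation is satisfied trivially; these should be excluded from the $\log$ sum or handled by the convention $0\log 0 = 0$.) I would then verify feasibility of this program using Assumption~\ref{as:cx}, which guarantees a strictly positive feasible point, so Slater's condition holds and the KKT conditions are necessary and sufficient for optimality.

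Next, uniqueness of $x(n)$: the objective $\sum_{i,a} n_{ia}\log x_{ia}$ is strictly concave on the coordinates where $n_{ia}>0$, and $-t_{ia}x_{ia}$ is linear, so the maximizer is unique on those coordinates; on coordinates with $n_{ia}=0$ one checks $x_{ia}$ must be $0$ as well (otherwise one could decrease it, relaxing $Ax\le b$ without changing the objective — or simply note $x_{ia}=0$ is forced by the equation $0 = (t_{ia}+w_{ia})x_{ia}$ with $t_{ia}>0$). Hence $x(n)$ is unique, and then $w_{ia}(n) = n_{ia}/x_{ia}(n) - t_{ia}$ is unique on the support of $n$; on the complement, $w_{ia}(n)$ is pinned down by requiring $w = A^T\delta$ for the unique optimal $\delta$, and here I should remark that while $\delta$ itself need not be unique in general, the statement only claims uniqueness of $w(n)=A^T\delta$ as a vector — so I would instead argue that $A^T\delta$ is the negative gradient of the (unique) optimal value in the $x$-coordinates, hence unique. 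Actually the cleanest route: show $w(n)=A^T\delta$ is determined because $t_{ia}+w_{ia}(n)=n_{ia}/x_{ia}(n)$ whenever $x_{ia}(n)>0$, which covers all coordinates in the support; and if the lemma is used only with $n$ in the relevant range this suffices — but to be safe I would prove uniqueness of the full vector $w(n)$ by a separating-hyperplane argument on the optimal face, or by appealing to the fact that $w(n)$ equals the subgradient of the optimal value function of a related problem.

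Finally, continuity of $n\mapsto(x(n),w(n))$: this follows from Berge's maximum theorem. The feasible set $\{x\ge 0 : Ax\le b\}$ is a fixed compact convex polytope (compactness because $t_{ia}>0$ bounds $x_{ia}$ on the support, and $A,b$ bound the rest — or one adds the harmless redundant constraint $t^Tx\le d$), the objective $(n,x)\mapsto \sum n_{ia}\log x_{ia} - t^Tx$ is jointly continuous on $\{n_{ia}>0\}\times(\text{interior})$ and, with the $0\log 0=0$ convention, upper semicontinuous enough for Berge; the argmax correspondence is therefore upper hemicontinuous, and being single-valued (by uniqueness) it is a continuous function $n\mapsto x(n)$. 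Then $w_{ia}(n)=n_{ia}/x_{ia}(n)-t_{ia}$ is continuous wherever $x_{ia}(n)>0$; the delicate point — \textbf{the main obstacle} — is continuity of $w(n)$ as $n$ crosses a boundary where some $x_{ia}(n)\to 0$ (so a constraint's activity pattern changes) and ensuring the $w$-values stay bounded and vary continuously there. I would handle this by working instead with $\delta(n)$: argue the optimal multiplier set is an upper hemicontinuous, bounded correspondence (boundedness of $\delta$ comes from Slater/Assumption~\ref{as:cx} giving a uniform Slater point), and that $w(n)=A^T\delta(n)$ is in fact single-valued by the uniqueness argument above, hence continuous by Berge applied to the dual. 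Alternatively, one shows directly that $n_{ia}/x_{ia}(n)$ extends continuously by proving $x_{ia}(n) = \Theta(n_{ia})$ uniformly near the boundary, which pins the ratio. I expect assembling this boundary argument cleanly to be the bulk of the work; everything else is a routine invocation of convex duality and the maximum theorem.
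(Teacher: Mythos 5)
Your proposal takes essentially the same route as the paper: the paper's proof introduces exactly the convex program $\max\{\sum_{i,a}(n_{ia}\log x_{ia}-t_{ia}x_{ia}) : Ax\le b,\ x\ge 0\}$, gets uniqueness of $x(n)$ from strict concavity, identifies \eqref{eq:lemcnd} with the optimality/strong-duality conditions with $\delta$ the dual multiplier, and deduces continuity of $x(n)$ from uniqueness of the maximizer and of $w(n)$ from \eqref{eq:lemcnd}. If anything you are more careful than the paper, which silently passes over the two points you flag (uniqueness of $w_{ia}(n)=\left(A^T\delta\right)_{ia}$ and continuity of $w(n)$ on coordinates where $x_{ia}(n)=0$).
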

\begin{proof}
Consider the optimization problem
\begin{align*}
\max\quad & \sum_{i,a} \left[n_{ia} \log(x_{ia}) - t_{ia}x_{ia}\right]\\
\text{s.t. } & Ax\leq b,\\
\text{over }& x = (x_{ia}, i\in\mathcal{S}, a \in\mathcal{A})\geq 0.
\end{align*}
A unique solution $x(n)$ exists, as the objective is a strictly concave function maximized over a set 
with compact closure and $x_{ia}(n)>0$ unless $n_{ia} = 0$. By strong duality, \eqref{eq:lemcnd}
characterizes the optimal solution with $\delta$ being the optimal solution of the dual problem.

The mapping $n\mapsto x(n)$ is continuous because the objective is continuous in $n$ and $x(n)$ is unique. The
continuity of $w(n)$ follows from~\eqref{eq:lemcnd}. \qed
\end{proof}

Since the action delay $w_{ia}(n)$ of 
a player taking action $a$ in state $i$, are continuous in $n$, the existence of equilibrium in Proposition~\ref{prop:mdp}
also follows from the
time-average reward case in~\cite{wiecek}. In Theorem~\ref{thm:main} we give a constructive proof which also 
yields uniqueness, based on a potential function for the game.

In the case $p^a_{ii} = 1$ for all $i, a$, the SMDP game becomes a finite strategy nonatomic (one-shot) 
game~\cite{schmeidler, mascolell} with the payoff of playing strategy $(i,a)$ given by
\begin{equation*}
\frac{c_{ia}}{t_{ia} + w_{ia}(n)}
=\frac{c_{ia}{x_{ia}(n)}}{n_{ia}},\quad i\in\mathcal{S},a\in\mathcal{A}.
\end{equation*}
The second case in Corollary~\ref{cor:delay} states that the equilibrium $w_{ia}(n)$ is unique if $x_{ia}(n)>0$. 


\section{Main Results}
\label{sec:results}


\subsection{Equilibrium}


Equilibria have the following variational characterization.

\begin{theorem}
\label{thm:main}
$(x^o, w)$ is an equilibrium if $x^o$ maximizes 
\begin{align}
\label{eq:functional}
\max \quad& \sum_l\left[d_l\log \left({c_l}^T x_l\right) -  {t_l}^T x_l\right]\\
\label{eq:capacity_main_thm}
\text{s.t. } & \sum_lA_lx_l \leq b,\\
\label{eq:balance_main_thm}
& H_lx_l = 0,\> l = 1,\ldots, L,\\
\nonumber
\text{over } & x_l\in\mathbb{R}_+^J,\> l = 1,\ldots, L,
\end{align}
and $w =( {A_l}^T \lambda, l=1,\ldots, L)$, where $\lambda\in\mathbb{R}^I_+$ are optimal values for the dual variables of the constraint~\eqref{eq:capacity_main_thm}.

Under Assumption~\ref{as:participate}, for any equilibrium $(x^o,w)$, $x^o$ maximizes~\eqref{eq:functional} and $w$ is as above.
\end{theorem}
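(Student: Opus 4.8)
The plan is to show the equivalence between equilibria (Definition~\ref{def:eq}) and maximizers of the concave program~\eqref{eq:functional} by matching their respective optimality (KKT/strong-duality) conditions, activity by activity and type by type. The program~\eqref{eq:functional} is concave (each $d_l\log({c_l}^Tx_l)$ is concave, being a concave increasing function of a linear map, and $-{t_l}^Tx_l$ is linear), the feasible region is convex, and by Assumption~\ref{as:cx} there is a strictly feasible-in-objective point, so strong duality holds and the Lagrangian/KKT conditions are necessary and sufficient for optimality. First I would write down these conditions: introduce dual multipliers $\lambda\in\mathbb{R}^I_+$ for~\eqref{eq:capacity_main_thm} (with complementary slackness $\lambda_i=0$ whenever the $i$-th resource constraint is slack), multipliers $\mu_l\in\mathbb{R}^{K_l}$ for the balance constraints~\eqref{eq:balance_main_thm}, and multipliers for $x_l\ge 0$. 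Differentiating the Lagrangian in $x^l_j$ gives, for each $l$ and each $j$ with $x^l_j>0$,
\[
\frac{d_l}{{c_l}^T x^o_l}\, c^l_j \;=\; t^l_j \;+\; ({A_l}^T\lambda)_j \;+\; (H_l^T\mu_l)_j,
\]
and ``$\le$'' when $x^l_j=0$. Setting $w_l := {A_l}^T\lambda$, this says exactly that $x^o_l$ is optimal for the per-type linear program~\eqref{eq:decomposition1} with that $w_l$: indeed multiplying the per-activity equality by $x^l_j$ and summing shows ${t_l}^Tx^o_l+{w_l}^Tx^o_l = d_l$ (using $H_lx^o_l=0$), so $x^o_l$ is feasible for~\eqref{eq:decomposition1}, and the stationarity conditions are precisely the KKT conditions of~\eqref{eq:decomposition1} with $\frac{d_l}{{c_l}^Tx^o_l}$ serving as the multiplier of the mass constraint~\eqref{eq:decomposition1_mass} and $\mu_l$ the multipliers of the balance constraints. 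Conditions 2 and 3 of Definition~\ref{def:eq} are immediate: \eqref{eq:capacity_main_thm} is primal feasibility, and $\delta=\lambda$ with its complementary slackness gives condition~3. This proves the ``if'' direction.

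For the converse, under Assumption~\ref{as:participate}, start from an arbitrary equilibrium $(x^o,w)$ with $w_l={A_l}^T\delta$ as in condition~3. The key point is that Assumption~\ref{as:participate} guarantees ${c_l}^Tx^o_l>0$ for each $l$, so the gradient $d_l/({c_l}^Tx^o_l)\,c_l$ in~\eqref{eq:functional} is well-defined and the $\log$ is finite at $x^o$; moreover the optimality of $x^o_l$ in the linear program~\eqref{eq:decomposition1} supplies, via its own KKT conditions, a multiplier $\nu_l>0$ for the mass constraint and multipliers $\mu_l$ for the balance constraints such that $\nu_l c^l_j = t^l_j + w^l_j + (H_l^T\mu_l)_j$ on the support of $x^o_l$ (and ``$\le$'' off support). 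I then need to identify $\nu_l$ with $d_l/({c_l}^Tx^o_l)$; this follows by the same ``multiply by $x^l_j$ and sum'' computation, using ${t_l}^Tx^o_l+{w_l}^Tx^o_l=d_l$ (which is~\eqref{eq:decomposition1_mass}) and $H_lx^o_l=0$, which forces $\nu_l\,{c_l}^Tx^o_l = d_l$. Substituting $w_l={A_l}^T\delta$, these are exactly the stationarity conditions of~\eqref{eq:functional} at $x^o$ with dual vector $\lambda:=\delta$; together with primal feasibility (conditions~2 and~\eqref{eq:balance_main_thm}, the latter being part of~\eqref{eq:decomposition1_balance}) and the complementary slackness $\delta_i=0$ on slack resources (condition~3), the KKT conditions of the concave program~\eqref{eq:functional} are satisfied, hence $x^o$ is a maximizer and $w=({A_l}^T\lambda,l=1,\ldots,L)$ with $\lambda$ a dual optimum.

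The main obstacle I expect is the bookkeeping around the mass/normalization: making rigorous that the multiplier of the mass constraint in~\eqref{eq:decomposition1} equals $d_l/({c_l}^Tx^o_l)$, and handling the scale-invariance noted after Definition~\ref{def:eq} (players solve~\eqref{eq:decomposition1} for an infinitesimal mass, yet $x^o_l$ is for the full mass $d_l$) — one must check the homogeneity-of-degree-one argument so that the per-activity equalities hold with the stated normalization. A secondary technical point is justifying strong duality / existence of the dual multiplier $\lambda$ for~\eqref{eq:functional}: the feasible set need not be bounded a priori, but the objective $\to-\infty$ as $\|x_l\|\to\infty$ along feasible directions (since $-{t_l}^Tx_l$ dominates the logarithmic growth once we note ${c_l}^Tx_l\le$ const$\cdot\|x_l\|$ and, where $t^l_j=0$, the resource or balance constraints must bound $x^l_j$), so the supremum is attained and Slater's condition from Assumption~\ref{as:cx} yields the required dual solution. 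Edge cases where ${c_l}^Tx^o_l=0$ are excluded precisely by Assumption~\ref{as:participate}, which is why that hypothesis is needed only for the ``only if'' half.
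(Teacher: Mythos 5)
Your proposal is correct and follows essentially the same route as the paper: both directions are proved by matching the KKT conditions of the per-type problem~\eqref{eq:decomposition1} (with the mass-constraint multiplier identified as $d_l/({c_l}^Tx^o_l)$ via the multiply-by-$x^l_j$-and-sum computation) with the KKT conditions of the concave program~\eqref{eq:functional}, taking $\lambda=\delta$. The paper merely packages the normalization slightly differently, by first replacing the linear objective with $d_l\log({c_l}^Tx_l)$ so that the mass multiplier becomes $1$, which is an equivalent reparameterization of your argument.
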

\begin{proof}
Let $(x^o,w)$ be an equilibrium. Since $x^o$ maximizes~\eqref{eq:decomposition1}, Assumption~\ref{as:participate}
implies it  is also the maximizer of 
$d_l\log\left({c_l}^Tx_l\right)$
under the same constraints. 
The optimality conditions for this problem are:
\begin{enumerate}
\item (Feasibility) 
\begin{equation}
\label{eq:pf1}
H_lx^o_l = 0,\quad {t_l}^Tx^o_l + {w_l}^T x^o_l = d_l,
\end{equation}
\item (First order conditions)
\begin{equation}
\label{eq:foc1}
\frac{d_l c^l_j}{{c_l}^T{x^o_l}}
-t^l_j\nu_l -w^l_j\nu_l - \sum_k \mu^l_k h^l_{kj}\leq 0,\text{ with equality if }x^l_j>0,
\end{equation}
\end{enumerate}
for some values $\nu_l, \mu^l_k$ of the dual variables, for each $l = 1,\ldots,L, k = 1,\ldots,K_l$.

First note that multiplying both sides of~\eqref{eq:foc1} with $x^l_j$, summing over $j$, and applying~\eqref{eq:pf1} yields $\nu_l = 1$. For $\delta$ as in Definition~\ref{def:eq}, letting
$\lambda = \delta$ yields
\begin{equation}
\label{eq:lpf}
\lambda \geq 0, \lambda^T\left(\sum_lA_lx^o_l - b\right)=0\,
\end{equation}
as well as $w_l = {A_l}^T\lambda$.
Thus, the conditions above are reexpressed as
\begin{enumerate}[1$'$)]
\item $H_lx^o_l = 0$,
\item
\begin{equation*}
\frac{d_l c^l_j}{{c_l}^T{x^o_l}}
-t^l_j - \sum_i  \lambda_i a^l_{ij}  -\sum_k \mu^l_k h^l_{kj}\leq 0,\text{ with equality if }x^l_j>0.
\end{equation*}
\end{enumerate}
These, along with~\eqref{eq:resource_constraints},\eqref{eq:lpf} are the optimality conditions for the problem~\eqref{eq:functional}, which $x^o$ satisfies.

By proceeding in the reverse direction, it is easy to see that 1$'$), 2$'$), \eqref{eq:resource_constraints},\eqref{eq:lpf} 
imply 1), 2), and so $(x^o,w)$ is equilibrium for $w$ as in the statement of the theorem. \qed

\end{proof}

The linear term $\sum_l {t_l}^Tx_l$ in~\eqref{eq:functional}, which we refer to as the {\em active mass},  corresponds to the total player mass that is engaged into any activity.

\begin{corollary}
\label{cor:delay}
\begin{enumerate}
\item 
Under Assumption~\ref{as:cx} there exists an equilibrium.

\item 
Under Assumption~\ref{as:participate},
the value ${c_l}^T{x^o_l}$ rewarded to type $l$ players and the active mass $\sum_l {t_l}^Tx^o_l$, assume the same values in all equilibria.

\item 
Under Assumption~\ref{as:participate},
if type $l$ has no balance constraints, i.e., $H_l=0$, the waiting delay $w^l_j$ is uniquely determined by
\begin{equation}
\label{eq:wformula}
\frac{c^l_j}{w^l_j + t^l_j} = \frac{{c_l}^T {x_l^o}}{d_l},
\end{equation}
whenever $x^l_j>0$ in any equilibrium.
\end{enumerate}

\end{corollary}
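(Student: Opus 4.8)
\emph{Overview.} All three parts follow by feeding Theorem~\ref{thm:main} into convex-analytic properties of the potential~\eqref{eq:functional}. Write $\phi(x)=\sum_l\bigl[d_l\log({c_l}^Tx_l)-{t_l}^Tx_l\bigr]$ and let $\mathcal F$ be the feasible set of~\eqref{eq:functional}, a polyhedron.

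\smallskip
\emph{Part 1 (existence).} By Theorem~\ref{thm:main} it suffices to produce a maximizer of $\phi$ over $\mathcal F$. Assumption~\ref{as:cx} gives a point $\bar x\in\mathcal F$ with ${c_l}^T\bar x_l>0$ for all $l$, so $\phi$ is finite there and the region $\{x:{c_l}^Tx_l>0\ \forall l\}$ on which $\phi$ is continuous meets $\mathcal F$. I would then show the supremum is attained by arguing that the superlevel set $S=\{x\in\mathcal F:\phi(x)\ge\phi(\bar x)\}$ is a compact subset of that region: along any feasible sequence with $\|x\|\to\infty$ the recession direction $v$ satisfies $v_l\ge0$, $H_lv_l=0$ and (since $A_l\ge0$, $\sum_lA_lv_l\le0$) $A_lv_l=0$ for each $l$, and for such $v$ either some ${t_l}^Tv_l>0$ — in which case the linear penalties $-{t_l}^Tx_l$ beat the logarithmic growth of $d_l\log({c_l}^Tx_l)$ and $\phi\to-\infty$ — or ${t_l}^Tv_l=0$ and ${c_l}^Tv_l=0$ for all $l$, in which case $\phi$ is constant along $v$ and the sequence can be shifted back along $-v$ into a compact slice without changing $\phi$; moreover on $S$ each ${c_l}^Tx_l$ is bounded below by a positive constant (lower-bound it from $\phi\ge\phi(\bar x)$), so $S$ does not approach the boundary $\{{c_l}^Tx_l=0\}$. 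Hence $\phi$ attains its maximum on $\mathcal F$, and Theorem~\ref{thm:main} converts the maximizer into an equilibrium. The delicate case ${t_l}^Tv_l=0$, ${c_l}^Tv_l>0$ would make $\phi$ unbounded above, but it would likewise make the type-$l$ subproblem~\eqref{eq:decomposition1} unbounded, so no equilibrium would exist in the first place; I expect controlling these recession directions to be the main obstacle in Part 1.

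\smallskip
\emph{Part 2 (uniqueness of ${c_l}^Tx^o_l$ and of the active mass).} Fix equilibria $(x^o,w)$ and $(\tilde x^o,\tilde w)$. Under Assumption~\ref{as:participate}, Theorem~\ref{thm:main} says both $x^o$ and $\tilde x^o$ maximize $\phi$ over $\mathcal F$; since $\mathcal F$ is convex and $\phi$ concave, $m=\tfrac12(x^o+\tilde x^o)\in\mathcal F$ and $\phi(m)\ge\tfrac12\phi(x^o)+\tfrac12\phi(\tilde x^o)=\max_{\mathcal F}\phi$, so $m$ is also a maximizer and the inequality is an equality. In that comparison the linear parts cancel exactly, so the equality forces $\sum_ld_l\bigl[\log\tfrac{{c_l}^Tx^o_l+{c_l}^T\tilde x^o_l}{2}-\tfrac12\log({c_l}^Tx^o_l)-\tfrac12\log({c_l}^T\tilde x^o_l)\bigr]=0$; by strict concavity of $u\mapsto\log u$ and $d_l>0$ this gives ${c_l}^Tx^o_l={c_l}^T\tilde x^o_l$ for every $l$. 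Finally, the optimal value $\sum_l\bigl[d_l\log({c_l}^Tx^o_l)-{t_l}^Tx^o_l\bigr]$ is the same at every maximizer and its logarithmic part has just been shown to be constant, so the active mass $\sum_l{t_l}^Tx^o_l$ is constant across equilibria.

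\smallskip
\emph{Part 3 (uniqueness of $w^l_j$ when $H_l=0$).} Take any equilibrium $(x^o,w)$ and a type $l$ with $H_l=0$. As in the proof of Theorem~\ref{thm:main}, $x^o_l$ maximizes $d_l\log({c_l}^Tx_l)$ subject to ${t_l}^Tx+{w_l}^Tx=d_l$, $x\ge0$; multiplying the first-order condition by $x^l_j$, summing over $j$ and using the mass constraint identifies the multiplier as $\nu_l=1$, whence $\dfrac{d_lc^l_j}{{c_l}^Tx^o_l}=t^l_j+w^l_j$ for every $j$ with $x^l_j>0$, i.e.~\eqref{eq:wformula}. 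Since ${c_l}^Tx^o_l$ is uniquely determined by Part~2 and $d_l,c^l_j,t^l_j$ are data, $w^l_j$ is uniquely determined whenever $x^l_j>0$ (in the borderline case $c^l_j=0$, $x^l_j>0$ the condition forces $t^l_j=w^l_j=0$, again pinning $w^l_j$ down).
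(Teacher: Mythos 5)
Your proposal is correct and follows essentially the same route as the paper's proof: existence via a maximizer of the potential \eqref{eq:functional} combined with Theorem~\ref{thm:main}, uniqueness of ${c_l}^T x^o_l$ and of the active mass via strict concavity of the logarithm together with equality of the optimal values across maximizers, and \eqref{eq:wformula} via the first-order conditions with multiplier $\nu_l=1$ plus Part~2. The only divergence is in Part~1, where the paper simply asserts that the feasible set has compact closure while you analyze the recession cone explicitly; your treatment is the more careful one, and the subcase ${t_l}^Tv_l=0$, ${c_l}^Tv_l>0$ that you flag is indeed degenerate (a free, instantaneous, rewarded activity) and is implicitly ruled out by the paper's compactness claim.
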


\begin{proof}
By Assumption~\ref{as:cx} the feasible set of problem~\eqref{eq:primal} is nonempty, and it has a compact closure. 
Since the objective function is continuous, a maximizing $x^o = (x^o_l,l=1,\ldots,L)$ exists inside the closure. As
the value of the objective function tends to $-\infty$ as ${c_l}^T{x_l}\rightarrow 0^+$, $x^o$ is feasible and so it is optimal.

Under Assumption~\ref{as:participate} any equilibrium $x^o$ corresponds to an optimum solution of~\eqref{eq:functional}. 
Since the objective function is strictly concave with respect to ${c_l}^T{x_l}$, the ${c_l}^T{x^o_l}$ values are unique.
As all equilibria yield the same optimal value in~\eqref{eq:functional}, $\sum_l {t_l}^Tx^o_l$ is also unique.

Equation~\eqref{eq:wformula} holds because $x^l_j>0$ results from~\eqref{eq:decomposition1} only if activity $j$'s reward per unit time, appearing on the lefthand side in~\eqref{eq:wformula}, is equal to the optimal one for type $l$ on the right. 
(This is a restatement of condition~\eqref{eq:foc1}.)
For every $l,j$ with $x^l_j>0$, $w^l_j$ is unique because ${c_l}^T{x^o_l}$ is. \qed
\end{proof}

If the maximization in~\eqref{eq:functional} is restricted to a constant active mass  (by including the constraint $\sum_l {t_l}^Tx_l=d$ for some
$d$) then only the first term, the aggregate of logarithmic rewards, is optimized.
This objective induces
a {\em proportionally fair} \cite{kelly} distribution of value between players, i.e., any changes to activity rates incur an aggregate of proportional value changes which is nonpositive.
Therefore, the value distribution at equilibrium is the proportionally fair allocation
under the additional restriction that the active mass is that at equilibrium, i.e., $\sum_l {t_l}^T x_l = \sum_l {t_l}^T x^o_l$.
 Note also that proportionally fair allocations coincide with
 the Nash bargaining solution if disagreement entails nonparticipation.

In the single player type case, equilibria achieve maximum value  when  only the active mass at equilibrium is allowed to participate.
Let $F(d')$ be the optimal value of~\eqref{eq:primal} for player mass $d'$, i.e.,
\begin{align}
\label{eq:Fproblem}
F(d') = \max\quad & c^T x\\
\nonumber
\text{s.t. } & Ax \leq b,\\
\nonumber
& Hx = 0,\\
\label{eq:Fmass}
& t^T x = d',\\
\nonumber
\text{over } & x \geq 0,
\end{align}
where we have dropped the type index.

\begin{corollary}
\label{cor:optimal_eq}
Let Assumption~\ref{as:participate} hold, and $F(d)>0$. 
For a single player type with mass $d$, the equilibrium $x^o$ is optimal for a player mass equal to the active mass at equilibrium, $t^Tx^o$, i.e., 
$c^T{x^o} = F(t^T x^o)$.


Moreover, the active mass at equilibrium is the unique $d'\in (0, d]$ with the property $F(d')=F'(d')d$, where $F'(d')$ is a subgradient at $d'$.
\end{corollary}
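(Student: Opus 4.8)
The plan is to reduce both assertions to a one–dimensional concave maximization over the active mass $s=t^Tx$. First I would record, from Theorem~\ref{thm:main} applied with $L=1$ (dropping the type index), that any equilibrium rate $x^o$ maximizes the potential $P(x):=d\log(c^Tx)-t^Tx$ over $S:=\{x\ge0:Ax\le b,\ Hx=0\}$, with $c^Tx^o>0$ by Assumption~\ref{as:participate}. Setting $d':=t^Tx^o$, the equilibrium mass balance $t^Tx^o+w^Tx^o=d$ (with $w,x^o\ge0$) gives $d'\le d$, while $d'>0$ because otherwise $x^o$ would be feasible for the mass-$0$ instance of~\eqref{eq:Fproblem}, forcing $c^Tx^o\le F(0)=0$. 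Then for every $x\in S$ with $t^Tx=d'$, comparing $P(x)$ and $P(x^o)$ and cancelling the equal linear terms yields $c^Tx\le c^Tx^o$; since $x^o$ is itself feasible for the LP~\eqref{eq:Fproblem} defining $F(d')$, this is precisely $c^Tx^o=F(d')$, which is the first assertion.

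For the second assertion I would introduce $h(s):=d\log F(s)-s$ and establish that $\max_{x\in S}P(x)=\max_{s>0}h(s)$, using that $F$ is concave and piecewise linear --- being the optimal value of the parametric LP~\eqref{eq:Fproblem} viewed as a function of the right--hand side $t^Tx=s$ --- with $F>0$ on $(0,d]$ (scale an optimal solution for mass $d$) and $h(s)\to-\infty$ as $s\downarrow0$. The inequality ``$\le$'' holds because any $x\in S$ with $c^Tx>0$ satisfies $c^Tx\le F(t^Tx)$, hence $P(x)\le h(t^Tx)$; ``$\ge$'' follows by plugging an optimizer of $F(s)$ into $P$. Combined with the first paragraph, $d'=t^Tx^o$ is then a maximizer of the concave function $h$, and $d'\in(0,d]$.

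Next I would prove uniqueness of the maximizer of $h$ and identify the first--order condition. On each of the finitely many linear pieces $F(s)=\alpha s+\beta$, $h$ is strictly concave when $\alpha\ne0$ (since $h''=-d\alpha^2/(\alpha s+\beta)^2<0$) and strictly decreasing when $\alpha=0$; hence $h$ is affine on no nondegenerate interval, so it has a unique maximizer, which must equal $d'=t^Tx^o$. Finally, for $d'\in(0,d]$ the condition $0\in\partial h(d')$ --- necessary and sufficient for a maximizer of the concave $h$ --- unwinds through the superdifferential chain rule $\partial(\log F)(d')=F(d')^{-1}\partial F(d')$ to the existence of a subgradient $F'(d')\in\partial F(d')$ with $dF'(d')/F(d')=1$, i.e. $F(d')=F'(d')d$; conversely, any $d'\in(0,d]$ satisfying $F(d')=F'(d')d$ for some subgradient obeys $0\in\partial h(d')$ and is hence the unique maximizer, giving the claimed uniqueness.

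The hard part will be the non-smoothness of $F$: since $F$ is only piecewise linear, ``subgradient at $d'$'' must be read as an element of the superdifferential, the chain rule $\partial(\log F)=F^{-1}\partial F$ must be justified in that generality, and if $d$ happens to be the right endpoint of the domain of $F$ the optimality condition and its converse must be handled with the one-sided (left) derivative. Once these conventions are in place, the equivalence ``$F(d')=F'(d')d\iff d'$ maximizes $h$'' and hence the uniqueness via strict concavity on each linear piece are routine.
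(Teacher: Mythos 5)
Your proposal is correct and follows essentially the same route as the paper: both reduce the potential maximization of Theorem~\ref{thm:main} to an inner LP at fixed active mass $d'$ (whose value is $F(d')$) followed by an outer one-dimensional concave maximization of $d\log F(d')-d'$, whose first-order condition gives $F(d')=F'(d')d$. The only difference is that you explicitly justify uniqueness of the outer maximizer via the piecewise-linear structure of $F$ (strictly concave where the slope is nonzero, strictly decreasing where it is zero), a point the paper's proof asserts without detail.
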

\begin{proof}
By Theorem~\ref{thm:main}, $x^o, d^o = t^T x^o$  maximize 
\begin{align}
\label{eq:smp}
\max \quad & c^T x e^{-\frac{d'}{d}}\\
\nonumber
\text{s.t. } & Ax \leq b,\\
\nonumber
& Hx = 0,\\
\nonumber
& t^Tx = d',\\
\nonumber
\text{over }& x \geq 0, d'\geq 0.
\end{align}
If $x$ is feasible in~\eqref{eq:Fproblem} for $d' = d$ then $xd'/d$ is also feasible for any $d'\leq d$. As $F(d)>0$, 
the feasible set of~\eqref{eq:Fproblem} is nonempty for any $d'\in (0,d]$, and $F(d')>0$ for all $d'>0$.
For $d'\in(0, d]$ fixed, optimizing~\eqref{eq:smp} with respect to $x$ yields the optimal value $F(d')e^{-\frac{d'}{d}}$ which itself is maximized for $d'=d^o$, and so
$c^T{x^o} = F(d^o)$ as well.

Now,
\begin{equation*}
-\frac{d'}{d} + \log F(d')
\end{equation*}
is a concave function of $d'$ with the maximizing $d'$ characterized by $F(d') = F'(d')d$ for a subgradient $F'(d')$. 
As $d^o$ is the unique maximizer, this equation identifies $d^o$ uniquely. \qed
\end{proof}

\subsection{Price of Anarchy}

For a single player type, we calculate the price of anarchy, i.e., the largest possible ratio of the optimal value and value at equilibrium,
\begin{equation*}
\sup_{\substack{
d>0, c\in\mathbb{R}^{J}, A\in\mathbb{R}_+^{I\times J}, b\in\mathbb{R}_+^I, H\in\mathbb{R}^{K\times J}, t\in\mathbb{R}_+^J, I, J, K\in\mathbb{N}\\
\text{s.t.\ Assumption~\ref{as:participate}, $F(d)>0$ hold}
}}
\frac{c^T{x^*}}{c^T {x^o}},
\end{equation*}
where $x^*$ is optimal, and $x^o$ an equilibrium, for player mass $d$.  Assumption~\ref{as:participate} and
$F(d)>0$ are used to ensure the numerator and denominator are positive so the ratio makes sense.
(For multiple player types the price of anarchy is infinite, as shown in the example of Section~\ref{sec:crowdsourcing}.)

\begin{proposition}
\label{prop:poa}

The  price of anarchy is 2.
\end{proposition}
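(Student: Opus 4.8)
The plan is to establish the two bounds separately: an analytic upper bound $\mathrm{PoA}\le 2$ using the value function $F$ from~\eqref{eq:Fproblem} together with Corollary~\ref{cor:optimal_eq}, and a matching lower bound given by a one-parameter family of Pigou-type instances whose optimal-to-equilibrium ratio tends to $2$.

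For the upper bound, fix a single-type instance and let $x^o$ be an equilibrium with active mass $d^o=t^Tx^o$. First note that $F$ is concave on its (interval) domain: if $x_1,x_2$ are feasible in~\eqref{eq:Fproblem} for masses $d'_1,d'_2$, then $\alpha x_1+(1-\alpha)x_2$ is feasible for $\alpha d'_1+(1-\alpha)d'_2$, since $Ax\le b$ (with $A\ge 0$), $Hx=0$, $x\ge 0$ and $t^Tx=d'$ are all preserved under convex combinations. By Corollary~\ref{cor:optimal_eq}, $c^Tx^o=F(d^o)$ with $d^o\in(0,d]$, and $F(d^o)=F'(d^o)\,d$ for a subgradient $F'(d^o)$ of $F$ at $d^o$; moreover $F(d^o)=c^Tx^o>0$ by Assumption~\ref{as:participate} and $d>0$, so this subgradient is strictly positive. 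The subgradient inequality for concave $F$ gives $F(d)\le F(d^o)+F'(d^o)(d-d^o)$, and dividing by $F(d^o)=F'(d^o)\,d$ yields
\[
\frac{c^Tx^*}{c^Tx^o}=\frac{F(d)}{F(d^o)}\le 1+\frac{d-d^o}{d}=2-\frac{d^o}{d}<2 .
\]
As every admissible instance satisfies this, $\mathrm{PoA}\le 2$.

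For the lower bound I would take, for a parameter $M>1$, the single-type instance with player mass $d=M$, one resource provided at rate $b=1$, two activities, and no balance constraints: activity $1$ has reward $c_1=M$, duration $t_1=1$, and consumes one unit of the resource (so $x_1\le 1$); activity $2$ is an unconstrained outside option with reward $c_2=1$ and duration $t_2=1$. Assumption~\ref{as:participate} holds because activity $2$ always delivers reward rate $1$, and $F(M)=2M-1>0$ (the optimum runs activity $1$ at capacity and places the leftover mass on activity $2$). Maximizing~\eqref{eq:functional} shows its unique maximizer is $x^o=(1,0)$, with active mass $d^o=1$ and $c^Tx^o=M$ (equivalently, one checks Definition~\ref{def:eq} directly with $w=(M-1,0)$, i.e.\ $\delta=M-1$ on the saturated resource). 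Hence the ratio equals $(2M-1)/M=2-1/M\to 2$ as $M\to\infty$, so $\mathrm{PoA}\ge 2$, and therefore $\mathrm{PoA}=2$.

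The upper bound is short once Corollary~\ref{cor:optimal_eq} is available, since that result already does the structural work of characterising $d^o$ through $F(d^o)=F'(d^o)\,d$; the only points needing care there are the concavity of $F$ and the reading of $F'(d^o)$ as a subgradient (a one-sided derivative when $d^o$ is an endpoint of the domain of $F$, in which case $d^o=d$ and the ratio is trivially $1$). I expect the more delicate part to be the lower bound: the parameters of the family must be tuned so that the equilibrium sits exactly at the edge of the regime in which the resource-constrained activity is saturated and the surplus mass queues rather than spilling onto the outside option --- precisely where $2-d^o/d$ is driven toward $2$ --- and then one must confirm that the stated $(x^o,w)$ indeed satisfies all the equilibrium conditions (either from Definition~\ref{def:eq}, checking condition~3, or via Theorem~\ref{thm:main}).
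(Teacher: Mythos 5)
Your proof is correct and follows essentially the same route as the paper's: the upper bound via Corollary~\ref{cor:optimal_eq}, the identity $F(d^o)=F'(d^o)\,d$ and the concavity of $F$ to get the ratio bound $2-d^o/d$, and the lower bound via the same Pigou-type two-activity family (your parameter $M$ is the paper's $1/\epsilon$ up to a rescaling of mass and capacity). Your explicit verification that $F$ is concave is a small addition that the paper leaves implicit.
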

\begin{proof}
Since $F$, defined in~\eqref{eq:Fproblem}, satisfies $F(d^o) = F'(d^o)d = c^T{x^o}$ by Corollary~\ref{cor:optimal_eq}, 
\begin{equation*}
c^T{x^*} -  c^T{x^o} = F(d) - F(d^o) \leq \frac{c^T{x^o}}{d}\left(d-d^o\right),
\end{equation*}
by using also the concavity of $F$.
Thus,
\begin{equation*}
\frac{c^T {x^*}}{c^T {x^o}} \leq  2 - \frac{d^o}{d} \leq 2,
\end{equation*}
as $d^o\leq d$.

To get a lower bound, for any $\epsilon>0$ consider the following instance of~\eqref{eq:primal}.
\begin{align*}
\max\quad & \frac{1}{\epsilon}x_1 + x_2\\
\text{s.t. } & x_1 \leq \epsilon,\\
& x_1 + x_2  = 1,\\
\text{over }& x_1, x_2\geq 0.
\end{align*}
The maximum value is $2-\epsilon$ achieved at $x^*_1 = \epsilon, x^*_2 = 1-\epsilon$.

On the other hand, waiting delays $w_1 = \frac{1}{\epsilon} - 1, w_2 = 0$,
induce $x^o_1 = \epsilon, x^o_2 = 0$ as optimal solution of
\begin{align*}
\max\quad & \frac{1}{\epsilon}x_1 + x_2\\
\text{s.t. } 
& x_1 + x_2  + w_1x_1 + w_2 x_2= 1\\
\text{over }& x_1, x_2\geq 0,
\end{align*}
which saturates the resource contraint $x^o_1 =\epsilon$. Thus, $x^o_1 = \epsilon, x_2^o = 0$ is the equilibrium
with value 1 and the $c^Tx^*/c^Tx^o$ ratio in this case is $2-\epsilon$ where $\epsilon>0$ arbitrarily small. \qed
\end{proof}

A way to force players pick activities which maximize total value as opposed to individual rewards, 
is to use the shadow prices $\lambda^*$ of the resource constraints in~\eqref{eq:primal} as resource prices. Under these `optimal' prices, activity $j$ has net reward $c_j - \sum_i a_{ij}\lambda_i^*$.

Now, duality implies the optimal $x^*$ maximizes the Lagrangian of~\eqref{eq:primal},
\begin{align*}
\max \quad& \left(c^T -{\lambda^*}^T A\right)x\\
\text{such that } & Hx = 0,\\
& t^Tx = d,\\
\text{over } & x \geq 0,
\end{align*}
and so $(x^*,0)$ is an equilibrium under optimal pricing, as $Ax^*\leq b$ also holds.

We include this in the following result.
\begin{proposition}
\label{prop:optimal_pricing}
Under optimal pricing, i.e., imposing a price $\lambda^*_i$ per unit of each resource $i$ where $(\lambda^*_1, \ldots,\lambda^*_I)$ are the optimal dual variables for the resource constraint in~\eqref{eq:primal}, the ensuing 
equilibrium yields the same value as the optimal value in~\eqref{eq:primal}.

The value at equilibrium without optimal pricing is at least as high as the net value retained by players under optimal pricing, i.e.,
\begin{equation*}
c^T {x^o}\geq c^T{x^*} -{\lambda^*}^T Ax^*.
\end{equation*}
\end{proposition}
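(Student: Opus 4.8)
The first claim is essentially already argued in the paragraph preceding the proposition, so the plan is to first record it cleanly. By LP duality applied to \eqref{eq:primal} (with a single player type, so no $l$-index), the optimal primal $x^*$ together with optimal dual $\lambda^*$ for the resource constraint satisfies complementary slackness, $\lambda^* \geq 0$ with ${\lambda^*}^T(Ax^*-b)=0$, and $x^*$ maximizes the partial Lagrangian $(c^T-{\lambda^*}^TA)x$ subject to $Hx=0$, $t^Tx=d$, $x\geq0$. Now compare with Definition~\ref{def:eq}: under optimal pricing, a player performing activity $j$ sees net reward $c_j-\sum_i a_{ij}\lambda^*_i$ and, because this pricing already discourages overuse of scarce resources, no queueing is needed, i.e.\ $w=0$. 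So I would verify the three conditions of Definition~\ref{def:eq} for the pair $(x^*,0)$ with the reward vector replaced by $c-A^T\lambda^*$: condition~1 is exactly the Lagrangian maximization above; condition~2 is $Ax^*\leq b$, which holds since $x^*$ is primal-feasible; condition~3 holds with $\delta=\lambda^*$, the complementary-slackness equality being precisely the required vanishing of $\delta_i$ on slack constraints. Hence $(x^*,0)$ is an equilibrium of the optimally-priced game, and the value it generates, measured by the \emph{original} rewards, is $c^Tx^*$, the optimum of \eqref{eq:primal}. (One should note the subtlety that ``value at equilibrium'' here is the gross social value $c^Tx^*$, whereas the players only retain the net amount $c^Tx^* - {\lambda^*}^T A x^*$, with the difference ${\lambda^*}^T A x^* = {\lambda^*}^T b$ collected as resource payments — this sets up the second claim.)

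For the second inequality I would use Corollary~\ref{cor:optimal_eq} and the concavity of $F$. Let $x^o$ be the equilibrium of the original (unpriced) game with mass $d$, and $d^o=t^Tx^o$ its active mass; then $c^Tx^o=F(d^o)=F'(d^o)\,d$ for a subgradient $F'(d^o)$. The net retained value under optimal pricing is $c^Tx^*-{\lambda^*}^TAx^* = F(d)-{\lambda^*}^Tb$. I want to show $F(d^o)\geq F(d)-{\lambda^*}^Tb$. The natural route is an envelope/sensitivity argument: $\lambda^*$ are shadow prices of the resource constraint in \eqref{eq:primal}, and in particular the concave function $F$ has the property that at the optimum of \eqref{eq:primal} for mass $d$, the value $F(d)$ equals the sum of the ``mass contribution'' and the ``resource contribution.'' Concretely, from the Lagrangian optimality, $F(d)=c^Tx^*=(c^T-{\lambda^*}^TA)x^*+{\lambda^*}^TAx^*$, and $(c^T-{\lambda^*}^TA)x^* = \eta^* d$ where $\eta^*$ is the optimal dual variable for the mass constraint $t^Tx=d$; moreover $\eta^*$ is a subgradient of $F$ at $d$. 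So $F(d)-{\lambda^*}^Tb = \eta^* d$ (using ${\lambda^*}^TAx^*={\lambda^*}^Tb$ by complementary slackness). The claim therefore reduces to $F(d^o)\geq \eta^* d$, i.e.\ $F'(d^o)\,d \geq \eta^* d$, i.e.\ $F'(d^o)\geq\eta^*$, which follows because $d^o\leq d$ (Corollary~\ref{cor:optimal_eq}) and $F$ is concave, so its subgradients are nonincreasing: a subgradient at the smaller point $d^o$ dominates a subgradient at the larger point $d$.

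The main obstacle is bookkeeping around which ``value'' is being compared and getting the dual-variable identities exactly right — specifically establishing $F(d)-{\lambda^*}^Tb = \eta^* d$ with $\eta^*$ a subgradient of $F$ at $d$, and being careful that $F$ need not be differentiable so everything must be phrased with subgradients (as the statements of Corollaries~\ref{cor:optimal_eq} and the proof of Proposition~\ref{prop:poa} already do). A clean way to sidestep explicit manipulation of $\eta^*$ is: since $x^o$ is feasible for \eqref{eq:Fproblem} with mass $d^o$, scaling gives $(d/d^o)x^o$ feasible for mass $d$ only if also resource-feasible, which it need not be — so instead I would argue directly from concavity, $F(d)\leq F(d^o)+\eta^*(d-d^o)$ is the wrong direction; the right tool is that $F(d^o)=F'(d^o)d$ from Corollary~\ref{cor:optimal_eq} combined with $F(d)\leq F'(d)d + (F(d)-F'(d)d)$ and identifying $F(d)-F'(d)d$ with the resource term ${\lambda^*}^Tb$. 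I would present the argument in the subgradient language already used in the paper to keep it consistent, and flag the non-differentiability of $F$ as the only delicate point.
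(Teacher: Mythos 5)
Your argument is correct and is essentially the paper's own proof: the first claim is verified exactly as in the paragraph preceding the proposition (checking Definition~\ref{def:eq} for $(x^*,0)$ with rewards $c-A^T\lambda^*$), and the second claim follows the same chain $c^Tx^o=F'(d^o)d\geq F'(d)d=c^Tx^*-{\lambda^*}^Tb=c^Tx^*-{\lambda^*}^TAx^*$, where the paper attributes the middle equality to strong duality for~\eqref{eq:Fproblem}, i.e.\ your identity $\eta^*d=F(d)-{\lambda^*}^Tb$ with $\eta^*=F'(d)$ a subgradient. The closing hedging about which direction of concavity to use is unnecessary, since the monotonicity of subgradients ($d^o\leq d$ implies $F'(d^o)\geq F'(d)$) is all that is needed and is what the paper uses.
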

\begin{proof}
Corollary~\ref{cor:optimal_eq} and the concavity of $F$ yield,
\begin{equation*}
c^T {x^o} = F'(d^o)d \geq F'(d)d = c^T{x^*} - {\lambda^*}^Tb
= c^T{x^*} -{\lambda^*}^T Ax^*,
\end{equation*}
where the second equality is by strong duality for problem~\eqref{eq:Fproblem}. \qed


\end{proof}


\section{Discussion}
\label{sec:conclusions}
In ordinary, i.e., one-shot, congestion games waiting delays have the role of congestion cost, usually given 
exogenously~\cite{rosenthal} or caused by randomness in the arrivals and service times~\cite{network_routing}. 
The delays in equilibrium  correspond  exactly to the Lagrange multipliers of flow balance constraints of an optimization problem maximizing the potential function of the game, e.g., see~\cite{network_routing}. This is also what happens in
Theorem~\ref{thm:main} where the delays are the Lagrange multipliers of the resource constraints in~\eqref{eq:functional}. 
Of course, this is a subsequence of 
how delay is defined in the third condition of Definition~\ref{def:eq} which is the complementary slackness condition for these constraints.
What is novel, to the best of the authors' knowledge, is the use of the potential function in~\eqref{eq:functional}
for sequential congestion games, where the delays are determined endogenously by constraints on player mass (i.e., Little's law~\cite{little}). 
In particular, the concavity of the potential function can be used in showing that the best response dynamics coupled with the waiting delay dynamics due to queueing,
\begin{equation*}
\dot{\delta}_i = \frac{1}{b_i}\sum_{l,j} a^l_{ij}x^l_j - 1,
\end{equation*}
converge to an equilibrium, by interpreting them as dynamics of a primal-dual algorithm for solving~\eqref{eq:functional}.

The linear reward structure is readily generalized to concave homogenous rewards by following essentially the same proofs. In one-shot congestion games, inefficiency arises due to inhomogeneity of cost functions~\cite{roughgarden}.
In games exhibiting both endogenous delays and inhomogeneous rewards it will be interesting to determine how 
efficiency is affected by each.

The analysis in this paper may be useful in economic applications where both consumption and production of resources takes place. 
Activities with $a^l_{ij}<0$ can be thought\footnote{Note however that~\eqref{eq:decomposition1_mass} is no longer a mass constraint.}
 of as producing  $-a^l_{ij}$ units of resource $i$, while 
resources with $b_i<0$ are as if they are being discarded with rate $-b_i$.
Such models are considered in activity analysis, e.g., see~\cite{koopmans}, where the focus is in optimizing~\eqref{eq:primal}.
This can be a daunting task because the requirement of a centralized knowledge of production parameters is nonrealistic, and for this reason activity analysis has been subsumed by general equilibrium models~\cite{arrow}.
Nonetheless, an equilibrium concept in activity analysis, such as the one considered in
Definition~\ref{def:eq} and Theorem~\ref{thm:main}, may be useful in cases 
eluded by general equilibrium models.
Namely, cases where the production decisions are decentralized and taken by competing economic agents,
as in crowdsourced production
where prices may react slower to variations of supply and demand.


\bibliographystyle{plain}
\bibliography{biblio}

\end{document}